\documentclass{article}
\usepackage[english]{babel}

\usepackage{tikz}
\usetikzlibrary{quantikz2}



\usepackage{amssymb}
\usepackage{amsthm}
\usepackage{authblk}

\newtheorem{lemma}{Lemma}
\newtheorem{corollary}{Corollary}

\newcommand{\myequation}{Eq.}
\newcommand{\myfig}{Fig.}
\newcommand{\mylemma}{Lemma}

\title{Efficient Equivalent of\\ Shallow Quantum Hashing}

\author[1]{Ilnar Zinnatullin\thanks{\texttt{IlnGZinnatullin@kpfu.ru}}}
\author[1,2]{Alexander Vasiliev \thanks{\texttt{vav.kpfu@gmail.com}}}
\affil[1]{Kazan Federal University, Kazan, Russia}
\affil[2]{Zavoisky Physical-Technical Institute, FRC Kazan Scientific Center of RAS, Kazan, Russia}



\date{November, 2025}

\begin{document}

\maketitle

\begin{abstract}
Quantum hashing is a widely used technique in quantum computation that allows us to design space-efficient algorithms and protocols. Recently, Vasiliev has shown that the phase form of shallow quantum hashing can be implemented by a circuit of depth \(2\).

In this paper, we establish a connection between shallow quantum hashing and single-qubit quantum hashing for the amplitude form. For a shallow circuit, we propose a circuit of depth \(1\) that achieves the same collision resistance.
\end{abstract}

\section{Introduction}

Current quantum machines are NISQ (Noisy Intermediate-Scale Quantum) devices \cite{preskill2018}. They are affected by decoherence and are noisy, i.e., not fault-tolerant. Moreover, each quantum machine's architecture has its own set of elementary gates -- typically, one-qubit rotations and the two-qubit CNOT gate -- as well as limited qubit connectivity. Thus, we need time-efficient quantum algorithms that can be efficiently compiled (transpiled) and mapped to NISQ hardware so that they are executed before fragile quantum states are broken.

Quantum hashing was introduced in \cite{ablayev2013}.
Later, in \cite{vasiliev2016}, its generalized version based on the notion of \(\varepsilon\)-biased sets for finite abelian groups was proposed. A core part of quantum hashing and some of the other quantum techniques like quantum fingerprinting \cite{buhrman2001} is a uniformly controlled rotation (UCR) gate \cite{mottonen2004}. An \(n\)-qubit UCR gate uses \(n-1\) control qubits to select the rotation angle applied to the remaining target qubit, resulting in \(2^{n-1}\) distinct angles. In \cite{mottonen2004}, an efficient decomposition of a UCR into an alternating sequence of one-qubit rotations and CNOT gates is proposed. On current quantum hardware, UCR implementation is costly as it requires an exponential number of elementary gates with respect to the number of qubits, leading to exponential circuit depth. One of the ways to simplify the technique, reduce the circuit depth, and make it more suitable for the current hardware is to replace multi-qubit controlled rotations with two-qubit controlled rotations. K\={a}lis, in his master thesis \cite{kalis2018}, proposed a so-called shallow circuit for a quantum automaton with \(3\) qubits. Later, Ziiatdinov et al. generalized this approach for the quantum fingerprinting algorithm and presented a shallow implementation of quantum finite automata \cite{ziiatdinov2023}. Using this approach, Vasiliev proposed an extremely time-efficient algorithm for the phase form of shallow quantum hashing \cite{vasiliev2023}.

Quantum hashing has gone beyond theory to experiment: its first experimental implementation used single-photon states encoded in orbital angular momentum (OAM) \cite{turaykhanov2021}. The technique was subsequently generalized and implemented for high-dimensional qudits using the same technology \cite{akatev2022}. Note that the construction from \cite{vasiliev2023} coincides with the quantum hash function based on the single-photon states with OAM encoding \cite{turaykhanov2021}. The existence of a suitable parameter set that guaranties high collision resistance for the OAM-based quantum hashing is proved in \cite{vasiliev2023_1}.

In this work, we present a result similar to that obtained in \cite{vasiliev2023}. That is, for the amplitude form of the shallow quantum hashing, we reduce the circuit depth down to one. In this case, the circuit consists of only one layer of single-qubit rotations. This is advantageous on the current quantum hardware, as it requires only one-qubit rotations about the \(y\)-axis on the Bloch sphere.

The paper is organized in the following way. Section \(2\) reviews the necessary preliminaries. Section \(3\) is devoted to single-qubit quantum hashing and Section \(4\) to shallow quantum hashing. In Section \(5\), we establish a connection between these two versions of the technique and propose an efficient implementation of shallow quantum hashing. Finally, we summarize our results in Section \(6\).
 
\section{Preliminaries}

\paragraph{\(\varepsilon\)-biased sets}


The bias of a set \(B = \{b_{0}, b_{1}, \ldots, b_{d-1}\} \subseteq \mathbb{Z}_{q}\) with respect to \(x \in \mathbb{Z}_{q}\) is defined in the following way:
\[
\operatorname{bias}(B, x) = \frac{1}{|B|}\left|\sum_{b \in B} e^{i\frac{2\pi bx}{q}}\right|.
\]

The set \(B\) is called \(\varepsilon\)-biased if \(\max_{x \neq 0}\operatorname{bias}(B, x) \leq \varepsilon\). Let \linebreak \(B = \{b_{0}, b_{1}, \ldots, b_{d-1}\}\) and \(B' = \{0, (b_{1}-b_{0}), \ldots, (b_{d-1}-b_{0})\}\). Then, for any \(x \in \mathbb{Z}_{q}\), \(\operatorname{bias}(B, x) = \operatorname{bias}(B', x)\), in other words, \(B\) and \(B'\) are equivalent in terms of its bias \cite{akatev2022}. Thus, without loss of generality, we consider the first element \(b_{0}\) to be equal to \(0\) in the rest of the paper. Note that \cite{ablayev2023}
\[
\frac{1}{d} \left| \sum_{j=0}^{d-1} \cos\left( \frac{2\pi b_{j}x}{q} \right) \right| \leq \frac{1}{d} \left| \sum_{j=0}^{d-1} e^{i\frac{2\pi b_{j}x}{q}} \right| \leq \varepsilon
\]
for any \(x \in \mathbb{Z}_{q}\setminus\{0\}\). 
It has been shown \cite{noga1990} that \(\varepsilon\)-biased sets of size \linebreak\(d = O\left( \frac{\log q}{\varepsilon^{2}} \right)\) exist.

\paragraph{Quantum computation}

We use the Dirac notation. Mathematically, a qubit is represented by a normalized column vector (ket vector) \(\ket{\psi} = \alpha_{0}\ket{0} + \alpha_{1}\ket{1}\), \(\ket{\psi}\in \mathcal{H}^{2}\) with \(\alpha_{0}, \alpha_{1} \in \mathbb{C}\) and \(|\alpha_{0}|^{2} + |\alpha_{1}|^{2} = 1\). The state of \(n\) qubits is a ket vector from \((\mathcal{H}^2)^{\otimes n}\) of the form \(\sum_{j=0}^{2^{n}-1}\alpha_{j}\ket{j}\), where \(\ket{j}\) denotes the computational-basis state corresponding to the binary representation of \(j\), \linebreak\(\alpha_{j} \in \mathbb{C}\) for  \(0 \leq j \leq 2^{n-1}-1\) and \(\sum_{j=0}^{2^{n}-1}|\alpha_{j}|^{2} = 1\).

The quantum circuit model \cite{nielsen2010} is a quantum computation model, analogous to the classical circuit model. A quantum circuit is essentially a specification of a quantum algorithm. It consists of qubits (that are depicted as horizontal lines), quantum gates that implement unitary transformations, and measurements that extract classical information from the qubits. The time complexity of the algorithm is the circuit depth, the space complexity of the algorithm is the circuit width, i.~e. the number of qubits. For further reading on quantum circuits, we refer the reader to \cite{nielsen2010}.

We use the following elementary gates: the Hadamard gate \(H =
\begin{pmatrix}
\frac{1}{\sqrt{2}} & \frac{1}{\sqrt{2}}\\
\frac{1}{\sqrt{2}} & -\frac{1}{\sqrt{2}} 
\end{pmatrix}\) and rotation about the \(y\)-axis on the Bloch sphere \(R_{y}(\theta) =
\begin{pmatrix}
\cos(\frac{\theta}{2}) & -\sin(\frac{\theta}{2})\\
\sin(\frac{\theta}{2}) & \cos(\frac{\theta}{2}) 
\end{pmatrix}\). We also use controlled rotations about the \(y\)-axis: two-qubit controlled rotation (see \myfig~\ref{fig:rotation_gates}a) and  multi-qubit controlled rotation (see \myfig~\ref{fig:rotation_gates}b). By convention, a filled (black) dot denotes that the control qubit is in state \(\ket{1}\), whereas an open (white) dot indicates \ket{0} control. Thus, in controlled rotations, the rotation gate is applied iff all of the control qubits are in the states indicated by the control symbols.


\begin{figure}[h!]
    \centering
    \includegraphics[width=0.95\linewidth]{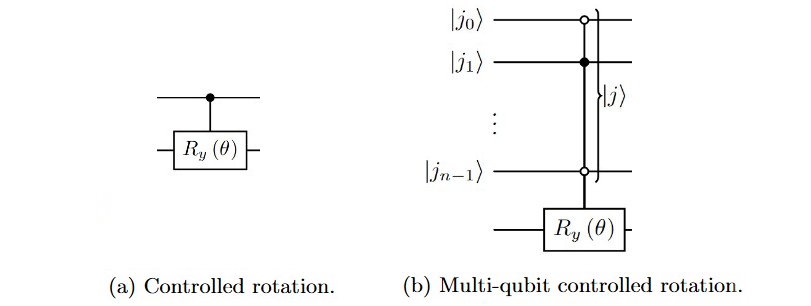}
    \caption{Controlled rotation gates.}
    \label{fig:rotation_gates}
\end{figure}

\paragraph{Quantum Hashing}

Let \(\mathbb{X}\) be a finite set with \(|\mathbb{X}| = N\). Let \(\psi : \mathbb{X} \rightarrow (\mathcal{H}^{2})^{\otimes n}\) be a classical-quantum function that maps an element of \(\mathbb{X}\) to the quantum state of \(n\) qubits. In general, for quantum hashing, \(N \gg n\).

In the quantum world, there are no collisions in the classical sense as different inputs are mapped to different outputs. However, a procedure for comparing of quantum states (equality test) is probabilistic, and we need to distinguish distinct hashes with high probability. The probabilistic nature of quantum computation may lead us to measurement errors, so the equality test may fail with non-zero probability. Since non-orthogonal states cannot be perfectly distinguished, we require them to be ``nearly orthogonal''.

We formalize ``nearly orthogonal'' states as follows. For \(\varepsilon \in [0, 1)\), states \(\ket{\psi}\) and \(\ket{\psi'}\) are \(\varepsilon\)-orthogonal if \(|\braket{\psi}{\psi'}| \leq \varepsilon\). The function \(\psi\) is \(\varepsilon\)-resistant (to collisions) if for any pair of distinct \(x_{1}, x_{2} \in \mathbb{X}\) their quantum images are \(\varepsilon\)-orthogonal, i.e., \(|\braket{\psi(x_{1})}{\psi(x_{2})}| \leq \varepsilon\). More details on quantum hashing can be found in \cite{bookablayev2015,bookablayev2023}.

Throughout the paper, we set \(\mathbb{X} = \mathbb{Z}_{q}\).

\section{Single-Qubit Quantum Hashing}

Let \(S = \{s_{0}, s_{1}, \ldots, s_{n-1}\} \subseteq \mathbb{Z}_{q}\) be a set of parameters. We define a classical-quantum function
\[
\psi_{S} : \mathbb{Z}_{q} \rightarrow (\mathcal{H}^{2})^{\otimes n}
\]
that maps elements of \(\mathbb{Z}_{q}\) to quantum states. That is, for an input \(x \in \mathbb{Z}_q\), its amplitude-form quantum hash \(\ket{\psi_{S}(x)}\) is composed of \(n\) single qubits:
\[
\ket{\psi_{S}(x)} = \ket{\psi_{S}^{0}(x)} \otimes \ket{\psi_{S}^{1}(x)} \otimes \ldots \otimes \ket{\psi_{S}^{n-1}(x)},
\]
where
\[
\ket{\psi_{S}^{j}(x)} = \cos\left(\frac{\pi s_{j}x}{q}\right)\ket{0} + \sin\left(\frac{\pi s_{j}x}{q}\right)\ket{1} = R_{y}\left(\frac{2\pi s_{j}x}{q}\right)\ket{0}
\]
for \(0 \leq j \leq n-1\).

\begin{lemma}\label{lemma:scalar_product_for_single-qubit_qh}
For the function \(\psi_{S}\) and any \(x_{1}, x_{2} \in \mathbb{Z}_{q}\), we have
\[
\braket{\psi_{S}(x_{1})}{\psi_{S}(x_{2})} = \prod_{j=0}^{n-1}\cos\left( \frac{\pi s_{j}(x_{1}-x_{2})}{q} \right).
\]
\end{lemma}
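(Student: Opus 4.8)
The plan is to exploit the tensor-product structure of the hash states together with the fact that all amplitudes are real. First I would invoke the elementary fact that the inner product of two tensor-product states factorizes as the product of the factor-wise inner products, so that
\[
\braket{\psi_{S}(x_{1})}{\psi_{S}(x_{2})} = \prod_{j=0}^{n-1}\braket{\psi_{S}^{j}(x_{1})}{\psi_{S}^{j}(x_{2})}.
\]
This reduces the claim to evaluating a single-qubit overlap for each index $j$ and then multiplying the results.

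Next, for a fixed $j$, I would write out $\braket{\psi_{S}^{j}(x_{1})}{\psi_{S}^{j}(x_{2})}$ explicitly from the definition of $\ket{\psi_{S}^{j}(x)}$. Because the amplitudes $\cos(\pi s_{j}x/q)$ and $\sin(\pi s_{j}x/q)$ are real, complex conjugation acts trivially and the overlap is just the sum of the products of corresponding amplitudes,
\[
\cos\!\left(\tfrac{\pi s_{j}x_{1}}{q}\right)\cos\!\left(\tfrac{\pi s_{j}x_{2}}{q}\right) + \sin\!\left(\tfrac{\pi s_{j}x_{1}}{q}\right)\sin\!\left(\tfrac{\pi s_{j}x_{2}}{q}\right).
\]
I would then apply the cosine angle-difference identity $\cos\alpha\cos\beta+\sin\alpha\sin\beta=\cos(\alpha-\beta)$ with $\alpha=\pi s_{j}x_{1}/q$ and $\beta=\pi s_{j}x_{2}/q$, which collapses each factor to $\cos\!\left(\tfrac{\pi s_{j}(x_{1}-x_{2})}{q}\right)$. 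Taking the product over $j=0,\dots,n-1$ yields exactly the stated formula.

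There is essentially no genuine obstacle here; the computation is routine, and the only points worth stating with care are the factorization of the inner product over the tensor product and the reality of the amplitudes (which is precisely why no conjugation survives in the single-qubit overlap). If I wanted to be fully rigorous about the factorization rather than citing it, I would establish it by induction on $n$: the base case $n=1$ is the single-qubit computation above, and the inductive step uses the mixed-product property $\braket{a\otimes b}{c\otimes d}=\braket{a}{c}\braket{b}{d}$ to peel off one qubit at a time.
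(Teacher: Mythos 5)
Your proposal is correct and follows exactly the same route as the paper's proof: factorizing the inner product over the tensor-product structure, computing each real single-qubit overlap, and collapsing it via the cosine angle-difference identity. The additional remarks on the reality of the amplitudes and the inductive justification of the factorization are sound but go slightly beyond what the paper states explicitly.
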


\begin{proof}
\begin{equation*}
\begin{split}
\braket{\psi_{S}(x_{1})}{\psi_{S}(x_{2})} &= \prod_{j=0}^{n-1} \braket{\psi_{S}^{j}(x_{1})}{\psi_{S}^{j}(x_{2})}\\
                                  &= \prod_{j=0}^{n-1} \left( \cos\left(\frac{\pi s_{j}x_{1}}{q}\right)\cos\left(\frac{\pi s_{j}x_{2}}{q}\right)+\sin\left(\frac{\pi s_{j}x_{1}}{q}\right)\sin\left(\frac{\pi s_{j}x_{2}}{q} \right) \right)\\
                                  &= \prod_{j=0}^{n-1} \cos\left(\frac{\pi s_{j}(x_{1}-x_{2})}{q}\right).
\end{split}
\end{equation*}

\end{proof}

\begin{lemma}
There exists \(S \subseteq \mathbb{Z}_{q}\) with \(|S| = O(\log q)\) such that the function
\[
    \psi_{S} : \mathbb{Z}_{q} \rightarrow (\mathcal{H}^{2})^{\otimes n},
\]
\[
    \psi_{S} : x \mapsto \ket{\psi_{S}^{0}(x)} \otimes \ldots \otimes \ket{\psi_{S}^{n-1}(x)},\, \ket{\psi_{S}^{j}} = R_{y}\left(\frac{2\pi s_{j}x}{q}\right)\ket{0}
\]
is collision-resistant.
\end{lemma}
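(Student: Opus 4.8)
The plan is to reduce collision resistance to an upper bound on a product of squared cosines and then control that product with an \(\varepsilon\)-biased set. By \mylemma~\ref{lemma:scalar_product_for_single-qubit_qh}, for distinct \(x_1,x_2\in\mathbb{Z}_q\), writing \(x:=x_1-x_2\neq 0\), the overlap is \(\braket{\psi_S(x_1)}{\psi_S(x_2)}=\prod_{j=0}^{n-1}\cos(\pi s_j x/q)\), so that \(|\braket{\psi_S(x_1)}{\psi_S(x_2)}|^2=\prod_{j=0}^{n-1}\cos^2(\pi s_j x/q)\). Thus it suffices to produce \(S\) with \(|S|=O(\log q)\) and a constant \(\varepsilon'\in[0,1)\) for which this product is at most \((\varepsilon')^2\) for every \(x\in\mathbb{Z}_q\setminus\{0\}\); this is precisely \(\varepsilon'\)-resistance, i.e. collision resistance.

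For the construction I would fix a constant \(\varepsilon_0\in(0,1)\) (for concreteness \(\varepsilon_0=1/2\)) and take \(S\) to be an \(\varepsilon_0\)-biased subset of \(\mathbb{Z}_q\). By the existence result recalled in the preliminaries, such a set exists with \(|S|=n=O(\log q/\varepsilon_0^2)=O(\log q)\). The reason for this choice is that, taking real parts in the definition of the bias, \(\sum_{j=0}^{n-1}\cos(2\pi s_j x/q)=\operatorname{Re}\sum_{j}e^{i2\pi s_j x/q}\le\left|\sum_{j}e^{i2\pi s_j x/q}\right|\le n\varepsilon_0\) for every \(x\neq 0\); note that the scaling matches, since squaring the single-qubit overlap will produce exactly the argument \(2\pi s_j x/q\) appearing in the bias.

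The main step is to pass from the product to this controlled average. Using the double-angle identity \(\cos^2(\pi s_j x/q)=\tfrac12\bigl(1+\cos(2\pi s_j x/q)\bigr)\), each factor is nonnegative, so the AM--GM inequality gives \[\prod_{j=0}^{n-1}\cos^2\!\left(\frac{\pi s_j x}{q}\right)\le\left(\frac1n\sum_{j=0}^{n-1}\frac{1+\cos(2\pi s_j x/q)}{2}\right)^{\!n}=\left(\frac12+\frac{1}{2n}\sum_{j=0}^{n-1}\cos\!\left(\frac{2\pi s_j x}{q}\right)\right)^{\!n}\le\left(\frac{1+\varepsilon_0}{2}\right)^{\!n}.\] Since \(\varepsilon_0<1\), the base \((1+\varepsilon_0)/2\) is strictly below \(1\), so \(|\braket{\psi_S(x_1)}{\psi_S(x_2)}|\le\bigl((1+\varepsilon_0)/2\bigr)^{n/2}=:\varepsilon'<1\) (in fact exponentially small in \(n\), hence polynomially small in \(q\)). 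Taking this \(\varepsilon'\) completes the argument.

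The point worth flagging is the product-to-average passage: the hash overlap is a product of cosines, whereas the \(\varepsilon\)-biased guarantee controls only their average. Collision resistance can fail only if, for some nonzero \(x\), all the \(\cos(2\pi s_j x/q)\) are simultaneously close to \(1\); the \(\varepsilon_0\)-biased property rules this out by forcing their average down to at most \(\varepsilon_0\), and AM--GM converts this into the product bound. The crucial detail is the additive constant \(\tfrac12\) coming from \(\tfrac12(1+\cos2\theta)\): it is what keeps the base \((1+\varepsilon_0)/2\) below \(1\) even though the cosine average is only bounded by \(\varepsilon_0\) rather than by \(0\). Everything else---invoking \mylemma~\ref{lemma:scalar_product_for_single-qubit_qh}, the existence of \(\varepsilon_0\)-biased sets, and taking real parts---is routine.
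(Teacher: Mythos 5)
Your proof is correct and follows the route the paper intends: both start from \mylemma~\ref{lemma:scalar_product_for_single-qubit_qh}'s product formula \(\prod_{j=0}^{n-1}\cos\left(\pi s_{j}(x_{1}-x_{2})/q\right)\), after which the paper simply delegates the remaining bound to the cited \cite[Lemma 2]{vasiliev2023}. Your completion --- choosing \(S\) to be a constant-bias \(\varepsilon_{0}\)-biased set of size \(O(\log q)\), applying the double-angle identity and AM--GM, and invoking the inequality \(\frac{1}{n}\left|\sum_{j}\cos\left(2\pi s_{j}x/q\right)\right|\leq\varepsilon_{0}\) already quoted in the preliminaries to get the bound \(\left((1+\varepsilon_{0})/2\right)^{n/2}<1\) --- is a correct, self-contained version of exactly the argument the paper outsources to that citation.
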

\begin{proof}
    Using \mylemma~\ref{lemma:scalar_product_for_single-qubit_qh}, we deduce that for \(x_{1}, x_{2} \in \mathbb{Z}_{q}\)
    \[
        \braket{\psi_{S}(x_{1})}{\psi_{S}(x_{2})} = \prod_{j=0}^{n-1} \cos\left(\frac{\pi s_{j}(x_{1}-x_{2})}{q}\right).
    \]
    For the rest of the proof, see \cite[Lemma 2]{vasiliev2023}.
\end{proof}

\section{Shallow Quantum Hashing}

Let \(B = \{b_{0}, b_{1}, \ldots, b_{d-1}\}\) be an \(\varepsilon\)-biased set. We define a classical-quantum function
\[
\psi_{B} : \mathbb{Z}_{q} \rightarrow (\mathcal{H}^{2})^{\otimes \log d}
\]
that maps elements of \(\mathbb{Z}_{q}\) to quantum states.  For \(x \in \mathbb{Z}_{q}\), the function \(\psi_{B}\) defines the amplitude form of quantum hashing as follows \cite{ablayev2013}:
\begin{equation}\label{eq:qh_standard_form}
    \begin{split}
        \ket{\psi_{B}(x)} &= \frac{1}{\sqrt{d}} \sum_{j=0}^{d-1}\ket{j} \left(\cos\left(\frac{2\pi b_{j}x}{q}\right)\ket{0} + \sin\left(\frac{2\pi b_{j}x}{q}\right)\ket{1}\right)\\
        &= \frac{1}{\sqrt{d}} \sum_{j=0}^{d-1}\ket{j} \otimes R_{y}\left(\theta_{j}\right)\ket{0}, 
    \end{split}
\end{equation}
where \(\theta_{j} = \frac{4\pi b_{j}x}{q}\), \(0 \leq j \leq d - 1\).

To construct a quantum state defined by \myequation~(\ref{eq:qh_standard_form}), a circuit presented in \myfig~\ref{fig:qh_standard_form} is used.


\begin{figure}[h!]
    \centering
    \includegraphics[width=0.70\linewidth]{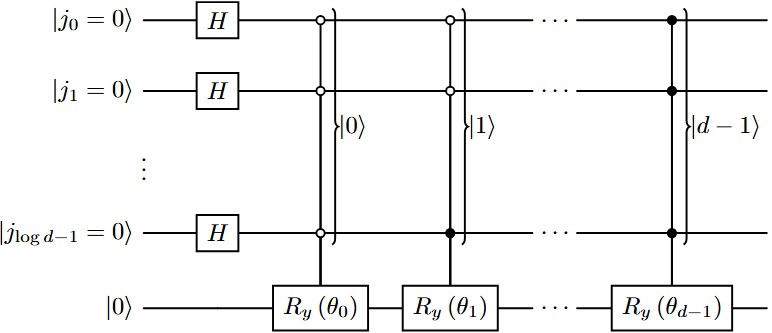}
    \caption{A circuit for amplitude form of quantum hashing.}
    \label{fig:qh_standard_form}
\end{figure}

Let \(n = \log d\) and \(S = \{s_{0}, s_{1}, \ldots, s_{n-1}\} \subseteq \mathbb{Z}_{q}\). We consider a special case of the set \(B\) when its elements are linear combinations of elements of \(S\): 
\begin{equation}\label{eq:b_element}
b_{j} = \sum_{k=0}^{n-1} j_{k}s_{k},\, j = j_{0}j_{1}\ldots j_{n-1},\quad 0 \leq j \leq d-1.
\end{equation}

\begin{lemma}\label{lemma:shallow_circuit}

The circuits illustrated in \myfig~\ref{fig:ucr} and \myfig~\ref{fig:shallow_ucr} are equivalent for angles \(\theta_{j}\) and \(\gamma_{k}\), with \(\theta_{j}\) given by
\begin{equation}\label{eq:theta_decomposition}
    \theta_{j} = \gamma + \sum_{k=0}^{n-1} j_{k}\gamma_{k},
\end{equation}
where \(j = j_{0}j_{1}\ldots j_{n-1}\) with \(j_{k} \in \{0, 1\}\) for \(0 \leq j \leq 2^{n}-1\) and \(0 \leq k \leq n-1\).


\begin{figure}[h!]
    \centering
    \includegraphics[width=0.65\linewidth]{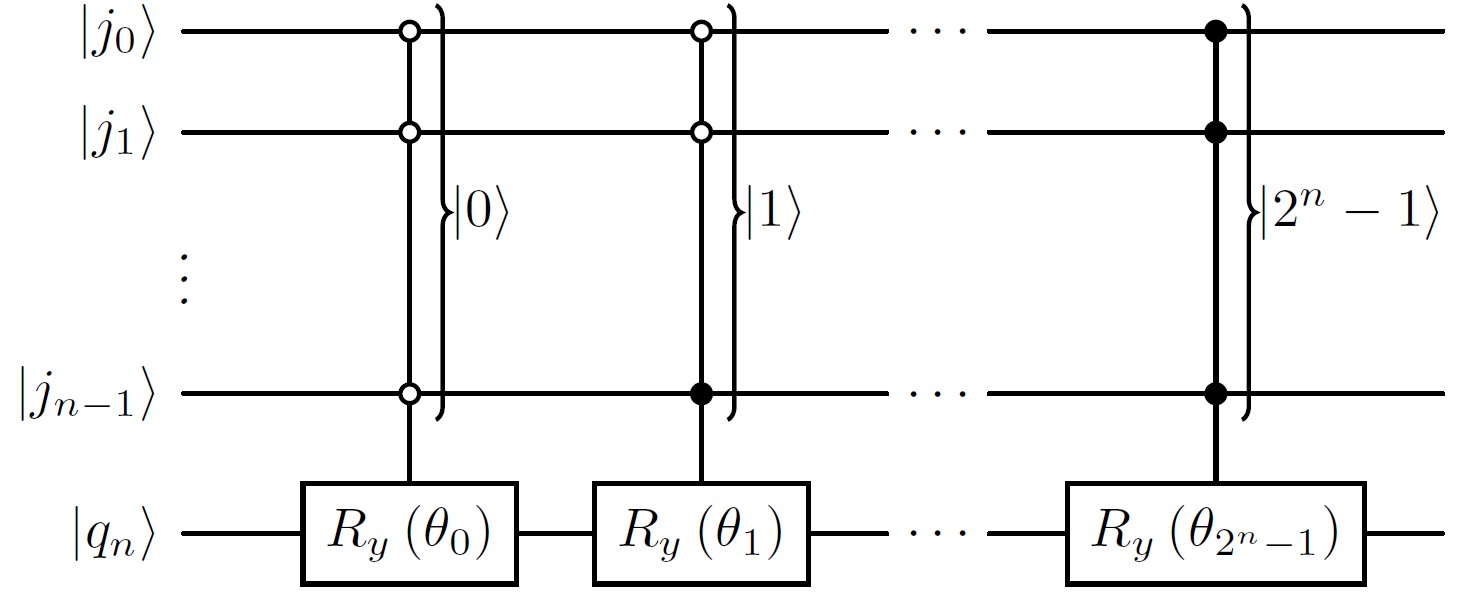}
    \caption{Uniformly controlled rotation.}
    \label{fig:ucr}
\end{figure}

\begin{figure}[h!]
    \centering
    \includegraphics[width=0.70\linewidth]{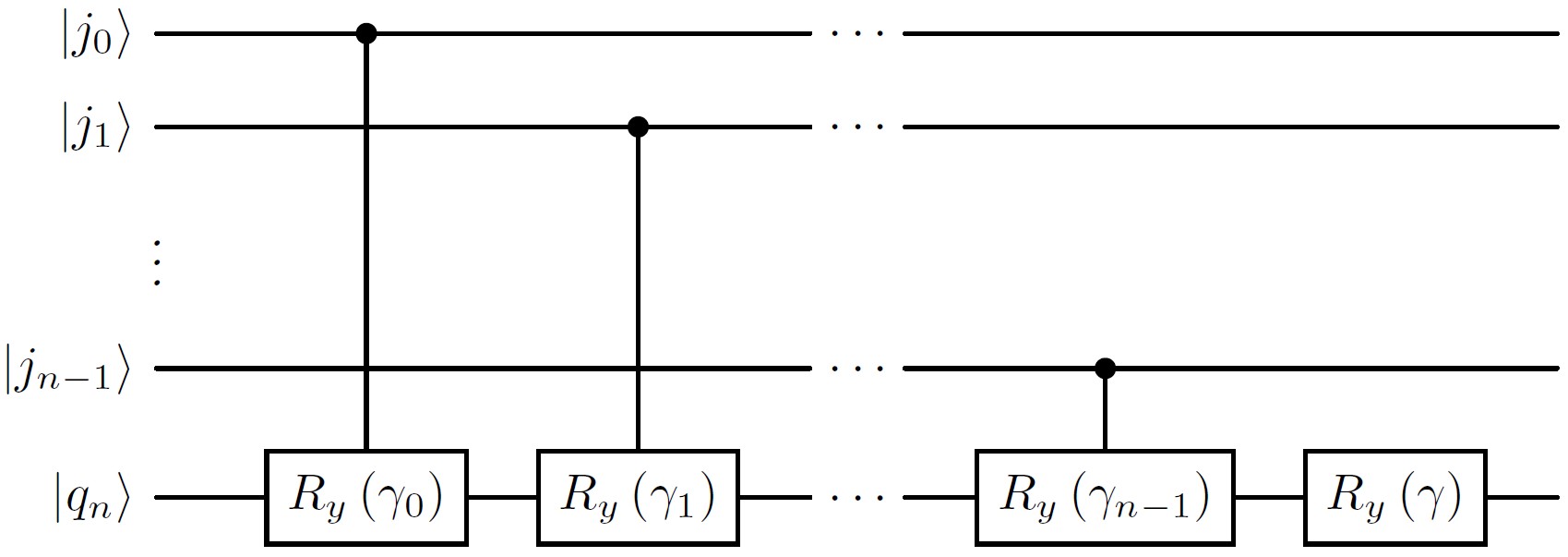}
    \caption{Shallow uniformly controlled rotation decomposition.}
    \label{fig:shallow_ucr}
\end{figure}

\end{lemma}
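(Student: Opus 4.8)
The plan is to prove the equivalence at the level of computational basis states of the control register and then lift it to the full unitaries by linearity. Both circuits act as controlled operations that leave the $n$ control qubits untouched and modify only the target qubit, so it suffices to compare their action on an arbitrary basis state $\ket{j}\otimes\ket{t}$, where $j=j_{0}j_{1}\ldots j_{n-1}$ with $j_{k}\in\{0,1\}$. First I would read off the action of the uniformly controlled rotation in \myfig~\ref{fig:ucr}: by the definition of a UCR, on the control state $\ket{j}$ exactly one multi-controlled rotation fires, namely the one whose pattern of black/white dots encodes the binary string $j$, so the map is $\ket{j}\otimes\ket{t}\mapsto\ket{j}\otimes R_{y}(\theta_{j})\ket{t}$.

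Next I would compute the action of the shallow circuit in \myfig~\ref{fig:shallow_ucr}. It applies one unconditional rotation $R_{y}(\gamma)$ to the target together with, for each control qubit $k$, a two-qubit controlled rotation that applies $R_{y}(\gamma_{k})$ precisely when $j_{k}=1$. On a fixed basis state $\ket{j}$ the control bits are classical, so each such gate contributes the factor $R_{y}(j_{k}\gamma_{k})$ to the target, and all these factors, being rotations about the same ($y$) axis, commute. The crux is the additive composition law $R_{y}(\alpha)R_{y}(\beta)=R_{y}(\alpha+\beta)$, which collapses the product
\[
R_{y}(\gamma)\prod_{k=0}^{n-1}R_{y}(j_{k}\gamma_{k})=R_{y}\!\left(\gamma+\sum_{k=0}^{n-1}j_{k}\gamma_{k}\right).
\]
If \myfig~\ref{fig:shallow_ucr} depicts each two-qubit controlled rotation in the CNOT-based form with half-angles $\gamma_{k}/2$, the only extra bookkeeping is to check that this gadget indeed realizes $R_{y}(\gamma_{k})$ conditioned on $j_{k}=1$; this follows from the conjugation identity $X R_{y}(\theta) X = R_{y}(-\theta)$, so that the two half-rotations cancel when $j_{k}=0$ and add to $R_{y}(\gamma_{k})$ when $j_{k}=1$.

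Comparing the two expressions, the circuits agree on $\ket{j}\otimes\ket{t}$ if and only if the exponents coincide, i.e. $\theta_{j}=\gamma+\sum_{k=0}^{n-1}j_{k}\gamma_{k}$, which is exactly \myequation~(\ref{eq:theta_decomposition}). Since this holds for every index $j$ and every target state $\ket{t}$, and both circuits are block diagonal in the control register, linearity upgrades the basis-state agreement to equality of the full transformations. The \emph{main obstacle} is simply getting the conditional/commuting structure right — recognizing that on a basis state the controls act classically, so the controlled rotations reduce to $R_{y}(j_{k}\gamma_{k})$ and compose additively; the secondary point, needed only if the half-angle CNOT form is used, is the verification via $X R_{y}(\theta) X = R_{y}(-\theta)$. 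Everything else reduces to the one-line composition law for same-axis rotations.
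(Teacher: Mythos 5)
Your proof is correct and follows essentially the same route as the paper's: fix a computational-basis state of the controls, observe that the UCR applies \(R_{y}(\theta_{j})\) by definition while the shallow circuit applies \(R_{y}(\gamma)\prod_{k:j_{k}=1}R_{y}(\gamma_{k})=R_{y}\bigl(\gamma+\sum_{k}j_{k}\gamma_{k}\bigr)\) by additivity of same-axis rotations, and conclude by linearity. Your extra remark on the half-angle CNOT gadget via \(XR_{y}(\theta)X=R_{y}(-\theta)\) is a harmless addition the paper does not need.
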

\begin{proof}
Let \(\ket{\psi} = \ket{j_{0}j_{1}\ldots j_{n-1}} \otimes \ket{q_{n}}\) be a fixed computational-basis quantum state with \(q_n \in \{0,1\}\). Consider the circuit in \myfig~\ref{fig:shallow_ucr}. The gate
\[
R_{y}(\gamma) \prod_{k:j_{k}=1} R_{y}(\gamma_{k}) = R_{y} \left( \gamma + \sum_{k=0}^{n-1} j_{k}\gamma_{k} \right) = R_{y}(\theta_{j})
\]
is applied to the target qubit \(\ket{q_{n}}\) as rotations about the same axis are additive.

In the circuit shown in \myfig~\ref{fig:ucr}, by definition of the UCR gate \cite{mottonen2004}, the operation \(R_{y}(\theta_{j})\) is applied to the target qubit \(\ket{q_{n}}\). Since both circuits act identically on all computational-basis states, they are equivalent.
\end{proof}

Using \myequation~(\ref{eq:b_element}) and \mylemma~\ref{lemma:shallow_circuit}, we note that a simplified circuit shown in \myfig~\ref{fig:qh_shallow_form} can be used with rotation angles \(\gamma_{j} = \frac{4\pi s_{j}x}{q}\). Note that gate \(R_{y}(\gamma)\) is omitted as in \myequation~(\ref{eq:theta_decomposition}) \(\gamma = 0\) in our case.


\begin{figure}[h!]
    \centering
    \includegraphics[width=0.65\linewidth]{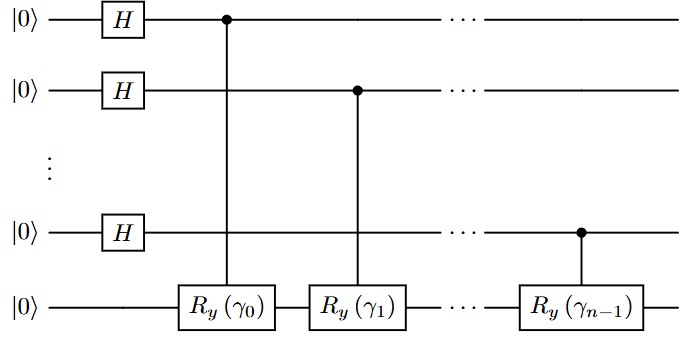}
    \caption{A circuit for shallow quantum hashing.}
    \label{fig:qh_shallow_form}
\end{figure}

Let us define for \(S\) and \(0 \leq j \leq 2^{n}-1\) 
\[f(S, j) = j_{0}s_{0} + j_{1}s_{1} + \ldots + j_{n-1}s_{n-1} = \sum_{k=0}^{n-1} j_{k}s_{k},\] 
where \(j_{0}j_{1} \ldots j_{n-1}\) is the binary representation of \(j\). Then, \(b_{j} = f(S, j)\) and \myequation~(\ref{eq:qh_standard_form}) can be rewritten as follows:

\begin{equation*}
    \begin{split}
        \ket{\psi_{B}(x)} = \ket{\widetilde{\psi}_{S}(x)} &= \frac{1}{\sqrt{2^{n}}} \sum_{j=0}^{2^{n}-1}\ket{j} \left(\cos\left(\frac{2\pi f(S, j)x}{q}\right)\ket{0} + \sin\left(\frac{2\pi f(S, j)x}{q}\right)\ket{1}\right)\\
        &= \frac{1}{\sqrt{2^{n}}} \sum_{j=0}^{2^{n}-1}\ket{j} \otimes R_{y}\left(\frac{4\pi f(S, j)x}{q}\right)\ket{0}.
    \end{split}
\end{equation*}

\begin{lemma}\label{lemma:scalar_product_for_shallow_qh}
For the function \(\widetilde{\psi}_{S}\) and any pair of \(x_{1}, x_{2} \in \mathbb{Z}_{q}\), we have
\[
\braket{\widetilde{\psi}_{S}(x_{1})}{\widetilde{\psi}_{S}(x_{2})} = \cos\left(\frac{\pi (x_{1}-x_{2})}{q} \sum_{l=0}^{n-1} s_{l}\right)\prod_{k=0}^{n-1} \cos\left(\frac{\pi s_{k} (x_{1}-x_{2})}{q}\right).
\]
\end{lemma}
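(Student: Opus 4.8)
The plan is to reduce the inner product to a single scalar sum over the control index $j$ and then evaluate that sum by exploiting the bit-wise structure of $f(S,j)$. First I would use the orthonormality of the control-register basis states $\{\ket{j}\}$ to collapse the double sum arising from $\widetilde{\psi}_{S}(x_1)$ and $\widetilde{\psi}_{S}(x_2)$ into a single diagonal sum, and then recognize each surviving target-qubit term as exactly the single-qubit overlap already computed inside the proof of \mylemma~\ref{lemma:scalar_product_for_single-qubit_qh}. Combining these two observations and applying the cosine difference identity yields
\[
\braket{\widetilde{\psi}_{S}(x_{1})}{\widetilde{\psi}_{S}(x_{2})} = \frac{1}{2^{n}}\sum_{j=0}^{2^{n}-1} \cos\left(\frac{2\pi f(S,j)(x_{1}-x_{2})}{q}\right).
\]

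The crux is to evaluate this sum in closed form. Writing $\Delta = x_1 - x_2$ and $\alpha_{k} = \tfrac{2\pi s_{k}\Delta}{q}$, and using $f(S,j) = \sum_{k} j_{k} s_{k}$ with $j = j_{0}j_{1}\ldots j_{n-1}$, the argument of the cosine becomes $\sum_{k} j_{k}\alpha_{k}$. I would then pass to complex exponentials via $\cos\theta = \operatorname{Re} e^{i\theta}$, so that the sum over all binary strings factorizes across the bits:
\[
\sum_{j=0}^{2^{n}-1} e^{\,i\sum_{k} j_{k}\alpha_{k}} = \prod_{k=0}^{n-1}\Bigl(\sum_{j_{k}\in\{0,1\}} e^{\,i j_{k}\alpha_{k}}\Bigr) = \prod_{k=0}^{n-1}\bigl(1 + e^{\,i\alpha_{k}}\bigr).
\]

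The key algebraic step, and the place where the \emph{extra} global cosine factor appears, is the Euler/half-angle identity $1 + e^{i\alpha_{k}} = 2\cos(\alpha_{k}/2)\,e^{i\alpha_{k}/2}$. Applying it to every factor extracts $2^{n}\prod_{k}\cos(\alpha_{k}/2)$ together with a single phase $e^{\,i\sum_{k}\alpha_{k}/2}$; taking the real part converts that phase into $\cos\bigl(\sum_{k}\alpha_{k}/2\bigr)$. After cancelling the prefactor $1/2^{n}$ and substituting $\alpha_{k}/2 = \tfrac{\pi s_{k}\Delta}{q}$, the product becomes the claimed $\prod_{k}\cos\bigl(\tfrac{\pi s_{k}(x_1-x_2)}{q}\bigr)$ and the isolated phase becomes $\cos\bigl(\tfrac{\pi(x_1-x_2)}{q}\sum_{l} s_{l}\bigr)$, matching the statement exactly. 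I expect the main obstacle to be spotting this factorization-plus-half-angle manoeuvre rather than any later computation; once it is in place the remainder is bookkeeping. I would also verify that taking the real part is harmless (the imaginary part vanishes since the original sum of cosines is manifestly real) and that the asymmetry between the lone summed factor $\sum_{l} s_{l}$ and the product of individual cosines is correctly tracked through the substitution.
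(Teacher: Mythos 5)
Your proposal is correct, and it takes a genuinely different route from the paper's proof. Both arguments begin identically: orthonormality of the control register collapses the inner product to the single real sum $\frac{1}{2^{n}}\sum_{j=0}^{2^{n}-1}\cos\bigl(\tfrac{2\pi f(S,j)(x_{1}-x_{2})}{q}\bigr)$. From there the paper stays entirely in real trigonometry: it introduces the partial sums $S_{k}=j_{0}s_{0}+\ldots+j_{k}s_{k}$ and peels off one bit at a time, pairing the $j_{k}=0$ and $j_{k}=1$ terms via the sum-to-product identity $\cos(S_{k-1})+\cos(S_{k-1}+s_{k})=2\cos\bigl(S_{k-1}+\tfrac{s_{k}}{2}\bigr)\cos\bigl(\tfrac{s_{k}}{2}\bigr)$, so that each iteration extracts one factor $\cos\bigl(\tfrac{\pi s_{k}x}{q}\bigr)$ and leaves a residual half-shift $\tfrac{s_{k}}{2}$ inside the remaining nested sum; after $n$ iterations the accumulated shifts $\tfrac{1}{2}\sum_{l}s_{l}$ yield the lone leading cosine. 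Your route --- writing $\cos\theta=\mathrm{Re}\,e^{i\theta}$, factorizing the sum over binary strings into $\prod_{k}\bigl(1+e^{i\alpha_{k}}\bigr)$, and applying $1+e^{i\alpha}=2\cos(\alpha/2)\,e^{i\alpha/2}$ --- performs all $n$ peelings in a single step, with the phase $e^{i\sum_{k}\alpha_{k}/2}$ being exactly the paper's accumulated half-shift made explicit. The two are the same identity in different clothing (the sum-to-product formula is the real shadow of the half-angle factorization), but yours avoids the bookkeeping of nested partial sums and makes the origin of the extra global cosine factor transparent, at the modest cost of a detour through complex exponentials; your closing remark that taking the real part is harmless is also correct, since $\prod_{k}\cos(\alpha_{k}/2)$ is real and the passage from $\sum\cos$ to $\mathrm{Re}\sum e^{i\cdot}$ is just linearity.
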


\begin{proof}
By \(S_{k}\), we define
\[
S_{k} = j_{0}s_{0} + \ldots + j_{k}s_{k}, 
\]
for \(0 \leq k \leq n-1\).

We use the following equation:
\[
\cos(S_{k-1}) + \cos(S_{k-1} + s_{k}) = 2\cos\left(S_{k-1}+\frac{s_{k}}{2}\right)\cos\left(\frac{s_{k}}{2}\right)
\]
for \(1 \leq k \leq n-1\).

Let \(x = x_{1} - x_{2}\). Let us consider the inner product \(\braket{\widetilde{\psi}_{S}(x_{1})}{\widetilde{\psi}_{S}(x_{2})}\) of hashes corresponding to inputs \(x_{1}\) and \(x_{2}\):

\[
\begin{split}
&\frac{1}{2^{n}} \sum_{j=0}^{2^{n}-1} \cos\left(\frac{2\pi f(S, j)x_{1}}{q}\right)\cos\left(\frac{2\pi f(S, j)x_{2}}{q}\right)+ \sin\left(\frac{2\pi f(S, j)x_{1}}{q}\right)\sin\left(\frac{2\pi f(S, j)x_{2}}{q}\right)\\
&= \frac{1}{2^{n}} \sum_{j=0}^{2^{n}-1} \cos\left(\frac{2\pi f(S, j)(x_{1}-x_{2})}{q}\right) = \frac{1}{2^{n}} \sum_{j=0}^{2^{n}-1} \cos\left(\frac{2\pi f(S, j)x}{q}\right)\\
&= \frac{1}{2^{n}} \sum_{j_{0}=0}^{1}\sum_{j_{1}=0}^{1}\ldots \sum_{j_{n-1}=0}^{1} \cos\left(\frac{2\pi x}{q}(j_{0}s_{0}+j_{1}s_{1}+\ldots+j_{n-1}s_{n-1})\right)\\
&= \frac{1}{2^{n}} \sum_{j_{0}=0}^{1}\ldots \sum_{j_{n-2}=0}^{1} \cos\left(\frac{2\pi x}{q} S_{n-2}\right) + \cos\left(\frac{2\pi x}{q}(S_{n-2}+s_{n-1})\right)\\
&= \frac{1}{2^{n}} \sum_{j_{0}=0}^{1}\ldots \sum_{j_{n-2}=0}^{1} 2\cos\left(\frac{2\pi x}{q}
\left(S_{n-2}+\frac{s_{n-1}}{2}\right)\right)\cos\left(\frac{\pi s_{n-1}x}{q}\right)\\
&= \frac{1}{2^{n-1}}\cos\left(\frac{\pi s_{n-1}x}{q}\right) \sum_{j_{0}=0}^{1}\ldots \sum_{j_{n-2}=0}^{1} \cos\left(\frac{2\pi x}{q}\left(S_{n-2}+\frac{s_{n-1}}{2}\right)\right)\\
&= \frac{1}{2^{n-1}}\cos\left(\frac{\pi s_{n-1}x}{q}\right) \sum_{j_{1}=0}^{1}\ldots \sum_{j_{n-3}=0}^{1} \cos\left(\frac{2\pi x}{q}\left(S_{n-3}+\frac{s_{n-1}}{2}\right)\right)+\\
&\cos\left(\frac{2\pi x}{q}\left(S_{n-3}+\frac{s_{n-1}}{2}+s_{n-2}\right)\right)\\
                                  &= \frac{1}{2^{n-1}}\cos\left(\frac{\pi s_{n-1}x}{q}\right) \sum_{j_{1}=0}^{1}\ldots \sum_{j_{n-3}=0}^{1} 2\cos\left(\frac{2\pi x}{q}\left(S_{n-3}+\frac{s_{n-1}}{2}+\frac{s_{n-2}}{2}\right)\right)\cos\left(\frac{\pi s_{n-2}x}{q}\right)\\
                                  &= \frac{1}{2^{n-2}}\cos\left(\frac{\pi s_{n-1}x}{q}\right)\cos\left(\frac{\pi s_{n-2}x}{q}\right) \sum_{j_{0}=0}^{1}\ldots \sum_{j_{n-3}=0}^{1} \cos\left(\frac{2\pi x}{q}\left(S_{n-3}+\frac{s_{n-1}}{2}+\frac{s_{n-2}}{2}\right)\right)\\
                                  &= \cos\left(\frac{\pi x}{q} \sum_{l=0}^{n-1} s_{l}\right)\prod_{k=0}^{n-1} \cos\left(\frac{\pi s_{k} x}{q}\right).
\end{split}
\]
\end{proof}

\section{Efficient Algorithm for Shallow Quantum Hashing}

In this section, we establish a connection between shallow quantum hashing and quantum hashing via single qubits.

\begin{lemma}
Let \(S = \{s_{0}, \ldots, s_{n-1}\} \subseteq \mathbb{Z}_{q}\) be a set of parameters. The following quantum hash functions
\[
\ket{\widetilde{\psi}_{S}(x)} = \frac{1}{\sqrt{2^{n}}} \sum_{j=0}^{2^{n}-1}\ket{j} \otimes R_{y}\left(\frac{4\pi f(S, j)x}{q}\right)\ket{0}
\]
and
\[
\ket{\psi_{S}(x)} = \left(\bigotimes_{k=0}^{n-1} R_{y}\left(\frac{2\pi s_{j}x}{q}\right)\ket{0}\right) \otimes R_{y}\left(\frac{2\pi x}{q}\sum_{l=0}^{n-1}s_{j}\right)\ket{0}
\]
are equivalent in the sense of collision resistance, i.e. have the same collision resistance.
\end{lemma}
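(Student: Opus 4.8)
The plan is to reduce the claimed equivalence to a single equality of inner products. By definition, the collision resistance of a hash family is governed by \(\max_{x_1 \neq x_2}\left|\braket{\psi(x_1)}{\psi(x_2)}\right|\), so it suffices to show that the two functions produce the same overlap magnitude for every pair of distinct inputs \(x_1, x_2 \in \mathbb{Z}_q\). I would therefore compute both overlaps explicitly and compare them.

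First I would handle \(\ket{\psi_{S}(x)}\). The key observation is that this state is a tensor product of \(n+1\) single qubits: the first \(n\) carry the parameters \(s_0, \ldots, s_{n-1}\), while the last carries the parameter \(\sum_{l=0}^{n-1} s_l\). Hence it is exactly a single-qubit hash of the type treated in \mylemma~\ref{lemma:scalar_product_for_single-qubit_qh}, but taken with respect to the augmented parameter set \(S' = \{s_0, \ldots, s_{n-1}, \sum_{l=0}^{n-1} s_l\}\). Applying that lemma to \(S'\) yields
\[
\braket{\psi_{S}(x_{1})}{\psi_{S}(x_{2})} = \left(\prod_{k=0}^{n-1} \cos\left(\frac{\pi s_{k}(x_1-x_2)}{q}\right)\right)\cos\left(\frac{\pi (x_1-x_2)}{q}\sum_{l=0}^{n-1} s_l\right).
\]

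Next I would invoke \mylemma~\ref{lemma:scalar_product_for_shallow_qh}, which gives precisely the same expression for \(\braket{\widetilde{\psi}_{S}(x_1)}{\widetilde{\psi}_{S}(x_2)}\). Comparing the two formulas, the overlaps coincide identically as functions of \(x_1\) and \(x_2\); in particular their absolute values agree for every pair. Since \(\varepsilon\)-resistance is the statement that this common quantity is bounded by \(\varepsilon\) for all \(x_1 \neq x_2\), the two functions are \(\varepsilon\)-resistant for exactly the same values of \(\varepsilon\), which establishes equal collision resistance.

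The only subtlety, rather than a genuine obstacle, is recognizing that \(\ket{\psi_{S}(x)}\) uses \(n+1\) qubits instead of \(n\), and that the extra ``sum'' qubit with parameter \(\sum_{l} s_l\) is exactly what reproduces the leading factor \(\cos\left(\frac{\pi (x_1-x_2)}{q}\sum_{l=0}^{n-1} s_l\right)\) appearing in the shallow overlap of \mylemma~\ref{lemma:scalar_product_for_shallow_qh}. Once the parameter set is augmented in this way, the argument is an immediate comparison of the two inner-product expressions; no further estimation is needed. The advertised depth reduction is then automatic, since \(\ket{\psi_{S}(x)}\) is prepared by a single layer of \(R_y\) rotations acting on \(\ket{0}^{\otimes(n+1)}\).
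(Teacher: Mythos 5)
Your proposal is correct and follows essentially the same route as the paper: compute both overlaps via \mylemma~\ref{lemma:scalar_product_for_shallow_qh} and \mylemma~\ref{lemma:scalar_product_for_single-qubit_qh} and observe that they coincide, so the collision resistance is identical. If anything, you are slightly more careful than the paper in spelling out that \(\ket{\psi_{S}(x)}\) must be read as a single-qubit hash over the augmented parameter set \(S' = \{s_{0}, \ldots, s_{n-1}, \sum_{l} s_{l}\}\) on \(n+1\) qubits before \mylemma~\ref{lemma:scalar_product_for_single-qubit_qh} applies.
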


\begin{proof}
    Let \(x_{1}, x_{2} \in \mathbb{Z}_{q}\). By \mylemma~\ref{lemma:scalar_product_for_shallow_qh}, for the function \(\widetilde{\psi}_{S}\), we have
    \begin{equation}\label{eq:collision_resistance_shallow_qh}
        \braket{\widetilde{\psi}_{S}(x_{1})}{\widetilde{\psi}_{S}(x_{2})} = \cos\left(\frac{\pi (x_{1}-x_{2})}{q} \sum_{l=0}^{n-1} s_{l}\right)\prod_{k=0}^{n-1} \cos\left(\frac{\pi s_{k} (x_{1}-x_{2})}{q}\right).
    \end{equation}
    By \mylemma~\ref{lemma:scalar_product_for_single-qubit_qh}, for the function \(\psi_{S}\), we have
    \begin{equation}\label{eq:collision_resistance_single-qubit_qh}
        \braket{\psi_{S}(x_{1})}{\psi_{S}(x_{2})} = \prod_{k=0}^{n-1} \cos\left(\frac{\pi s_{k} (x_{1}-x_{2})}{q}\right)\cos\left(\frac{\pi (x_{1}-x_{2})}{q} \sum_{l=0}^{n-1} s_{l}\right).
    \end{equation}
    So, it is easy to see that the right parts of the Eqs.~(\ref{eq:collision_resistance_shallow_qh}) and (\ref{eq:collision_resistance_single-qubit_qh}) coincide.
\end{proof}

\begin{corollary}
The circuits in \myfig~\ref{fig:qh_shallow_form} and \myfig~\ref{fig:single-qubit_qh} are equivalent in the sense that they provide the same collision resistance. Note that \(\gamma_{k} = \frac{4\pi s_{k}x}{q}\), \(\gamma' = \frac{2\pi sx}{q}\), \(s = \sum_{j=0}^{n-1} s_{j}, x = x_{1} - x_{2}\).

\begin{figure}[h!]
    \centering
    \includegraphics[width=0.30\linewidth]{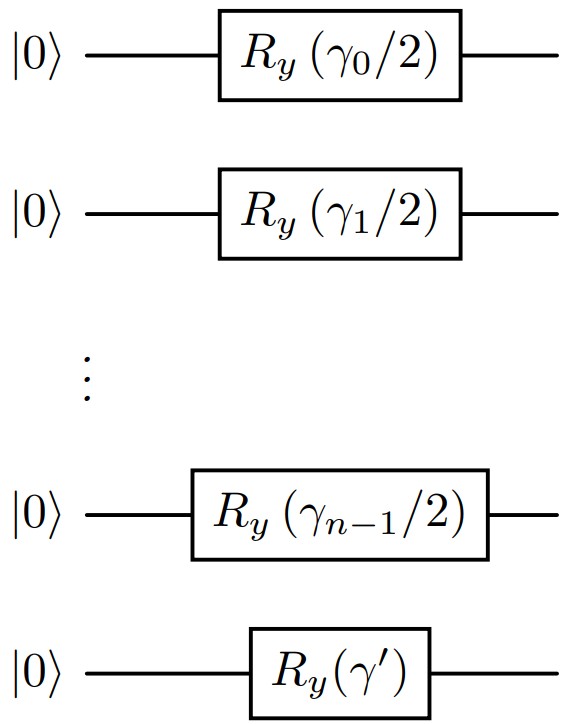}
    \caption{A circuit for single-qubit quantum hashing.}
    \label{fig:single-qubit_qh}
\end{figure}


\end{corollary}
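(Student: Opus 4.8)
The plan is to read the Corollary as a circuit-level restatement of the preceding Lemma, so the argument reduces to (i) identifying the state that each circuit prepares on a generic input $x$ and (ii) invoking the equality of inner products already established. Since collision resistance is defined as $\varepsilon=\max_{x_{1}\neq x_{2}}|\braket{\cdot}{\cdot}|$, it is a function of the pairwise inner products alone, so matching those inner products across the two circuits suffices.

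First I would verify that the circuit in \myfig~\ref{fig:qh_shallow_form} prepares the state $\ket{\widetilde{\psi}_{S}(x)}$. This is immediate from \mylemma~\ref{lemma:shallow_circuit} together with the substitution $b_{j}=f(S,j)$ from \myequation~(\ref{eq:b_element}): the Hadamard layer creates the uniform superposition $\frac{1}{\sqrt{2^{n}}}\sum_{j}\ket{j}$ over the index register, and the layer of two-qubit controlled $R_{y}(\gamma_{k})$ gates with $\gamma_{k}=\frac{4\pi s_{k}x}{q}$ applies $R_{y}(\theta_{j})=R_{y}(\frac{4\pi f(S,j)x}{q})$ to the target qubit conditioned on $\ket{j}$, which is exactly the rewritten form of \myequation~(\ref{eq:qh_standard_form}).

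Next I would read off the state prepared by the circuit in \myfig~\ref{fig:single-qubit_qh}. Each of its $n+1$ qubits starts in $\ket{0}$ and receives a single $R_{y}$ rotation, so the output is the product state $\ket{\psi_{S}(x)}=\bigl(\bigotimes_{k=0}^{n-1} R_{y}(\frac{2\pi s_{k}x}{q})\ket{0}\bigr)\otimes R_{y}(\gamma')\ket{0}$ with $\gamma'=\frac{2\pi s x}{q}$ and $s=\sum_{j} s_{j}$, matching the second function in the preceding Lemma. With both circuit outputs identified, I would apply that Lemma: for every pair $x_{1},x_{2}\in\mathbb{Z}_{q}$ one has $\braket{\widetilde{\psi}_{S}(x_{1})}{\widetilde{\psi}_{S}(x_{2})}=\braket{\psi_{S}(x_{1})}{\psi_{S}(x_{2})}$, since both equal the same product of cosines in $x=x_{1}-x_{2}$. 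Hence the two circuits attain identical $\varepsilon$, which is the claim.

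The main obstacle is conceptual rather than computational: the two circuits do not prepare the same state — the shallow circuit yields a state entangled between the index and target registers, whereas the single-qubit circuit yields a product state — so the asserted ``equivalence'' must be understood purely through the pairwise inner products, and one must be careful that collision resistance depends only on these moduli and not on the global structure of the states. A secondary point to check is that the rotation-angle conventions in the two figures are consistent with the function definitions (the factor-of-two relation between the $R_{y}$ argument and the cosine argument, together with the appearance of $\gamma_{k}=\frac{4\pi s_{k}x}{q}$ for the shallow controlled rotations as opposed to $\gamma'=\frac{2\pi s x}{q}$ for the extra single qubit), so that the circuit outputs really are $\ket{\widetilde{\psi}_{S}(x)}$ and $\ket{\psi_{S}(x)}$ as claimed.
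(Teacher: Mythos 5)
Your proposal is correct and matches the paper's (implicit) reasoning: the corollary is a direct circuit-level restatement of the preceding lemma, and the paper likewise justifies it purely by the equality of the pairwise inner products established there. Your added care in identifying the prepared states and in noting that collision resistance depends only on the moduli of inner products (not on the entanglement structure) simply makes explicit what the paper leaves unstated.
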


Finally, the amplitude form of shallow quantum hashing can be implemented with a circuit for single-qubit quantum hashing.

\section{Conclusion}
We have analyzed the algorithm for quantum hashing and have established a connection between the amplitude form of the shallow quantum hashing and the amplitude form of single-qubit quantum hashing. In this paper, we propose an extremely time-efficient quantum hash function equivalent to the amplitude form of shallow quantum hashing in terms of collision resistance and the number of qubits. The new version uses no entanglement, which makes it even more practical. 


\bibliographystyle{unsrt}
\bibliography{lib.bib}

\end{document}